\DeclareMathOperator*{\minimum}{\text{Minimum}}
\newtheorem{observation}{Observation}
\renewcommand{\c}[1]{\bm{#1}} 
\newcommand{\masterlp}{$(1)-(4)$}
\newcommand{\massys}{$(2)-(4)$}
\newcommand{\lp}{$(5)-(7)$}
\newcommand{\xlp}{$(5)-(8)$}
\newcommand{\lsip}{$(9)-(13)$}
\newcommand{\dwsub}{$(14)$}
\newcommand{\lscr}{$(10)-(13)$}
\newcommand{\bnlp}{$(15)-(18)$}
\newcommand{\mulp}{$(20)-(22)$}
\newcommand{\X}{{\footnotesize  \textbf{X}}}
\newcommand{\ax}{$\mathcal{A}$}
\newcommand{\bi}{{\scriptsize  BASIS INVERSE}}
\newcommand{\rhs}{{\scriptsize  RHS}}
\newcommand{\subprob}{{\scriptsize  SUBPROBLEM}}
\newcommand{\masprob}{{\scriptsize  MASTER PROBLEM}}
\newcommand{\dw}{DW}
\newcolumntype{P}[1]{>{\raggedleft\arraybackslash}p{#1}}
\begin{document}

\title{Mechanism Design via Dantzig-Wolfe Decomposition}

\author{Salman Fadaei \footnote{This work was done while the author was a graduate student in the Department of Informatics, TU M\"unchen, Munich, Germany.}}

\authorrunning{Fadaei} 
%
\tocauthor{Salman Fadaei}

\institute{ 
\email{salman.fadaei@gmail.com}}
\maketitle

\begin{abstract}

In random allocation rules, typically first an optimal fractional point is calculated via solving a linear program.
The calculated point represents a fractional assignment of objects or more generally packages of objects to agents.
In order to implement an expected assignment, the mechanism designer must decompose the fractional point into integer solutions, each satisfying underlying constraints.
The resulting convex combination can then be viewed as a probability distribution over feasible assignments out of which a random assignment can be sampled.
This approach has been successfully employed in combinatorial optimization as well as mechanism design with or without money.

In this paper, we show that both finding the optimal fractional point as well as its decomposition into integer solutions can be done at once.
We propose an appropriate linear program which provides the desired solution. 
We show that the linear program can be solved via Dantzig-Wolfe decomposition.
Dantzig-Wolfe decomposition is a direct implementation of the revised simplex method which is well known to be highly efficient in practice.
We also show how to use the Benders decomposition as an alternative method to solve the problem.
The proposed method can also find a decomposition into integer solutions when the fractional point is readily present perhaps as an outcome of other algorithms rather than linear programming.
The resulting convex decomposition in this case is tight in terms of the number of integer points according to the Carath{\'e}odory's theorem.

\keywords{Mechanism design, Random allocation, Convex decomposition, Dantzig-Wolfe decomposition, Benders decomposition}

\end{abstract}

\section{Introduction}
The technique of finding a fractional solution, and decomposing it into polynomially-many integer points has been successfully employed in many problems.
For a usage of the technique in combinatorial optimization, for instance, see Carr and Vempala  \cite{carr2000randomized}.
In mechanism design with bidders who have quasi-linear valuations, the framework presented by Lavi and Swamy for designing truthful and approximate mechanisms strongly relies on this technique \cite{lavi2011truthful,lavi2005truthful}.
Perhaps the best connection between linear programming and algorithmic mechanism design has been established by this framework.
Finally, for applications in mechanism design without money see e.g. Budish et al.  \cite{budish2013designing}, and Nguyen et al. \cite{nguyen2015assignment}.

Typically in such applications, first a fractional optimal point is calculated, and in a second step, the point is represented as a convex combination of integer points usually using the ellipsoid method.
A subroutine or an approximation algorithm which returns an integer point with respect to any cost vector is employed to construct the separation oracle for the ellipsoid method.
A separation oracle, in the ellipsoid method, states if a given point is feasible, or in case it is not feasible, the oracle returns a violated constraint.
A natural question that arises here is to ask if the two steps, optimization as well as convex decomposition, can be done at once without employing the ellipsoid method?

\subsection{Results and Techniques}
We propose an appropriate linear program for finding an optimal fractional point and its decomposition into integer points. 
We show how to use the Dantzig-Wolfe decomposition which is based on the revised simplex to solve the linear program. 
More specifically, we show that finding a convex combination of integer points whose value is maximum is indeed equivalent to solving a linear program using the Dantzig-Wolfe decomposition. 
The proposed method will improve the connection between linear programming and algorithmic mechanism design.

Dantzig-Wolfe (DW) decomposition comprises a \emph{master problem} and a \emph{subproblem}.
DW decomposition proceeds in iterations by solving the two problems in each iteration until the subproblem is not able to  find any point which can contribute to the objective value of the master problem \cite{bazaraa2011linear}.
Since we are interested in integer points, we run a subroutine or an approximation algorithm which returns integer solutions as the subproblem.
DW decomposition has been previously used for optimizing over a discrete set using branch and cut to obtain integer solutions \cite{desrosiers2005primer}. 
However, here we assume the existence of a subroutine which returns an approximate integer solution of good quality and prove that the algorithm ends in optimality.
To the best of our knowledge, this usage of DW decomposition in mechanism design has not been introduced before.


Dantzig-Wolfe decomposition is a variant of the revised simplex algorithm.
A computational evaluation of the Dantzig-Wolfe decomposition has been done in \cite{tebboth2001computational}.
The study shows DW decomposition has a high performance, especially when a reasonable block structure can be found. 

The convex combination calculated by our method is tight in terms of the number of integer solutions according to the Carath{\'e}odory's theorem provided that the number of constraints representing the underlying polytope is less than the dimension of the polytope.
We explain this fact further in the following.

\begin{theorem}[Carath{\'e}odory]
Given a polytope in $\mathbb{R}^n$, any point in the polytope is a convex combination of at most $n+1$ vertices of the polytope.
\end{theorem}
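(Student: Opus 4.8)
The plan is to start from the defining property of a polytope: $P \subseteq \mathbb{R}^n$ is the convex hull of its vertices, so any point $x \in P$ admits \emph{some} representation $x = \sum_{i=1}^{k} \lambda_i v_i$ with vertices $v_i$, weights $\lambda_i > 0$, and $\sum_{i=1}^{k} \lambda_i = 1$. The whole argument is then a reduction: I would show that whenever $k > n+1$, the point $x$ can be rewritten as a convex combination of strictly fewer vertices, and iterate this until at most $n+1$ remain.

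For the reduction step, suppose $k > n+1$. The $k-1$ difference vectors $v_2 - v_1, \dots, v_k - v_1$ lie in $\mathbb{R}^n$ and number more than $n$, so they must be linearly dependent. This dependence yields scalars $\mu_1, \dots, \mu_k$, not all zero, satisfying the two relations $\sum_{i=1}^{k} \mu_i v_i = 0$ and $\sum_{i=1}^{k} \mu_i = 0$. Since the $\mu_i$ sum to zero yet are not all zero, at least one of them is strictly positive.

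The key idea is then to perturb the convex combination along this dependence while staying inside $P$: for a parameter $t$, set $\lambda_i(t) = \lambda_i - t\mu_i$, which preserves both $x = \sum_{i=1}^{k} \lambda_i(t) v_i$ and $\sum_{i=1}^{k} \lambda_i(t) = 1$. Choosing $t = \min_{i : \mu_i > 0} \lambda_i / \mu_i$ keeps every weight nonnegative while forcing at least one weight to become exactly zero, so $x$ is now expressed using at most $k-1$ vertices. Repeating this step terminates with a representation that uses at most $n+1$ vertices, as claimed.

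I expect the main obstacle to be one of careful bookkeeping rather than depth: namely, verifying that the chosen $t$ simultaneously preserves nonnegativity of \emph{every} weight and annihilates at least one of them, and confirming that the vertices surviving each reduction are still among the original vertices of $P$. Beyond this, the argument rests only on the elementary dimension count forcing linear dependence and on the defining arithmetic of a convex combination.
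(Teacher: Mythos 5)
Your argument is correct and is the standard proof of Carath\'eodory's theorem; the paper itself states this as a classical result and offers no proof, so there is nothing to diverge from. The bookkeeping you worried about does go through: with all $\lambda_i>0$ and at least one $\mu_i>0$ (forced by $\sum_i\mu_i=0$ with not all $\mu_i$ zero), the choice $t=\min_{i:\mu_i>0}\lambda_i/\mu_i$ is strictly positive, keeps $\lambda_i - t\mu_i\ge 0$ for every $i$ (trivially when $\mu_i\le 0$, by minimality when $\mu_i>0$), and zeroes out the minimizing index, so each round strictly shrinks the support while only ever discarding vertices from the original list.
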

By standard polyhedra theory, the number of nonzero variables in an extreme point is upper bounded by the number of constraints in the underlying linear program (see e.g. \cite{chvatal1983linear}).
The proposed algorithm produces a convex combination of at most $m+1$ integer points, where $m$ is the number of constraints, and thus the solution is tight in this sense. 
It is very common that the number of constraints $m$ is less than the number of variables $n$.
For example, in the relaxations of combinatorial auctions, this is usually the case because the bidders may obtain any package of items (for which one decision variable is needed) and there are exponentially many packages of items.
Thus,  given that $m < n$, the number of integer solutions will be at most $n+1$ which is tight according to the Carath{\'e}odory's theorem.

Sometimes, a fractional point - not necessarily an optimum - is calculated via other methods rather than linear programming.
For example, a greedy algorithm might be used to find a fractional point.
This is because \emph{truthfulness}\footnote{Truthfulness is a desired property in algorithmic mechanism design, and assures that no bidder would benefit from reporting false valuations.} can be guaranteed via the greedy algorithm, but directly solving the linear program cannot assure truthfulness (see e.g. \cite{dughmi2010truthful}).
Our method can find a decomposition into integer solutions for such readily-present fractional points.

We also show how to apply the Benders decomposition to the problem.
Benders decomposition is known to be the dual of the Dantzig-Wolfe decomposition technique \cite{bazaraa2011linear}.
We observe that sometimes working with the Benders decomposition has advantages over the DW decomposition.
We discuss these advantages further in a separate section of the paper.

\subsection{Related Literature}
Prior to this work, there have been other attempts to replace the ellipsoid in finding convex decompositions.
An alternative method is given by Kraft et al. \cite{kraft2014fast}.
The main component of that work is an algorithm which is based on a simple geometric idea that
computes a convex combination within an arbitrarily small distance $\epsilon > 0$ to the fractional point.
Our proposed method has advantages over the result in \cite{kraft2014fast}.
First, the size of the convex decomposition (number of integer solutions) is strictly smaller than the size of the convex decomposition produced by the method in \cite{kraft2014fast}.
The size of the convex decomposition in \cite{kraft2014fast} might be as large as $O(s^3\epsilon^{-2})$, where $s$ is the number of nonzero components of the fractional point, and $\epsilon>0$. 
Our solution will have a size of at most $s+1$.

Second, our decomposition is exact and does not suffer from an $\epsilon>0$ compromise in the solution.
However, we provide no theoretical upper bound on the number of iterations, and the proposed method relies on the performance of Dantzig-Wolfe decomposition in practice.

Elbassioni et al. present an alternative method for finding a convex decomposition of a given fractional point  \cite{elbassioni2015towards}.
Their method relies on the multiplicative weights update method which is a general technique for solving packing and covering problems \cite{arora2012multiplicative,khandekar2004lagrangian}.
While the algorithm presented in \cite{elbassioni2015towards} has a theoretical upper bound on the number of iterations, the algorithm is inferior to the presented method here in two aspects.
First, their convex decomposition might have a size (the number of integer solutions) of $s(\lceil {\epsilon^{-2} \ln s}\rceil+1)$, $s$ being the number of nonzero components of the fractional point, and $\epsilon>0$.
As mentioned earlier, our solution will have a size of at most $s+1$.
Second, their convex decomposition can be as precise as $\frac{1}{1+4\epsilon}$  times the fractional solution (for some $\epsilon>0$) at the expense of increasing runtime, while our convex decomposition is exact.

\subsection{Structure}
In Section \ref{sec-setting}, we formally introduce the setting of the problem.
In Section \ref{sec:dwd}, we provide a short summary of DW decomposition technique.
In Section \ref{sec-int-dw}, we establish our main result and show how the DW principle can be applied to our setting.
Section \ref{sec:benders} is devoted to the Benders decomposition applied to our setting.
Section \ref{sec:dw-apps} discusses two applications of the adapted DW principle.
Finally, in Section \ref{sec:example-dw}, we provide a numerical example for the adapted DW technique.

\section{Setting} \label{sec-setting}
Consider a finite set of integer points in $\mathbb{Z}_+^{n}$.
Let $Q$ denote the \textit{convex hull} of all these points. That is 
$Q$ defines a polytope with integral extreme points.
Let $P = \big\{\c{x} \in \mathbb{R}^n \ |\ \c{Ax} \leq \c{b}\ \ \& \ \c{x} \geq \c{0}\big\}$ denote a polytope,
where $\c{A}$ is an $m$ by $n$ matrix, and $\c{b}$ an $m$-dimensional column vector.
A subroutine $\mathcal{A}$ for any cost function $\c{c} \in \mathbb{R}^n$ returns an integer point $\X \in Q$ such that $\c{c}\X \geq \c{cx}^*$, where $\c{x}^*=\arg \max \big\{\c{cx} \ | \ \c{x} \in P\big\}$.
Equivalently, we say subroutine $\mathcal{A}$ will return for any cost vector $\c{c}$ an integer point $\X\in Q$ such that $\c{c}\X \geq \c{c} \c{x}$ for any $\c{x}$ in $P$.
Let $\mathcal{I}$ denote the index set for integer points in $Q$. The set of integer points in $Q$ is therefore $\big\{\X_j\big\}_{j \in \mathcal{I}}$. 

Usually, subroutine $\mathcal{A}$ only accepts non-negative cost vectors.
Examples are approximation algorithms for NP-hard optimization problems.
For instance, the approximation algorithm provided for the knapsack problem works with non-negative profits of items.
However, in our setting, we expect $\mathcal{A}$ to work with any arbitrary cost vector.
In such cases, an assumption that $Q$ is a packing polytope is required: if $x \in Q \ \text{and}\ y \leq x$ then $y \in Q$.
See Lavi and Swamy for more information  \cite{lavi2011truthful}.

In this paper, we address the following problem. 
Given a cost vector $\c{c} \geq \c{0}$, find values $\big\{\lambda^*_j \geq 0\big\}_{j \in \mathcal{I}}$ such that $i$) $\sum_{j \in \mathcal{I}}\lambda^*_j=1$, and $ii$) $|\{\lambda^*_j \ |\ j \in \mathcal{I},\ \lambda^*_j >0 \}|$ is polynomial in $m$ and $n$, and $iii$) $\sum_{j\in \mathcal{I}} \lambda^*_j \X_j = \c{x}^*$, where $\c{x}^*=\arg \max \big\{\c{cx} \ | \ \c{x} \in P\big\}$.

Using the ellipsoid method, it can be shown that every point in $P$ can be written as a convex combination of the extreme points in $Q$ \cite{carr2000randomized,lavi2011truthful}.
In this work, aside from answering the question above, we give an alternative proof for this fact. 

\section{Summary of Dantzig-Wolfe Decomposition} \label{sec:dwd}

Dantzig-Wolfe decomposition belongs to \textit{column generation} techniques.
We shall here briefly go over the Dantzig-Wolfe Decomposition. 
For a detailed explanation of the method we refer the reader to \cite{bazaraa2011linear}.
Consider the following linear program.
  \begin{alignat*}{2}
    \text{Minimize} \quad &  \c{cx}  \\
    \text{subject to} \quad & \c{Ax} = \c{b} \\
	 & \c{x} \in X &  
  \end{alignat*}

Where $X$ is a bounded polyhedral of special structure, $\c{A}$ is a $m\times n$ matrix, $\c{c}$ is a $n$-dimensional vector and $\c{b}$ is a $m$-dimensional vector.

Since $X$ is a bounded polyhedra, then any point $\c{x} \in X$ can be represented as a convex combination of a finite number of extreme points of $X$.
Let us denote these points by $\c{x}_1, \c{x}_2, \ldots, \c{x}_l$, and substitute $\c{x}$ with its convex combination of extreme points, then the aforementioned LP can be transformed into the following program in which the variables are $\lambda_1,\lambda_2,\ldots, \lambda_l$.

  \begin{align}
    \text{Minimize} \quad &  \sum_{j=1}^l (\c{cx}_j)\lambda_j  \\
	 \text{subject to} \quad & \sum_{j=1}^l (\c{Ax}_j)\lambda_j = \c{b} \label{dw-cons1} \\
	 &\sum_{j=1}^l \lambda_j=1,   \label{dw-cons2} \\
	 &\lambda_j \geq 0 \quad j=1,2,\ldots,l  
  \end{align}
  
The linear program \masterlp\ is called the \textit{master problem} and the program which finds an appropriate $\c{x} \in X$ in each iteration is called the \textit{subproblem}. 
Since the number of extreme points of set  $X$ is exponentially many, we follow the idea of \textit{column generation} to find appropriate extreme point in each iteration.
The information is passed back and forth between the master problem and the subproblem as follows.
In each iteration a different cost coefficient is passed down by the master problem to the subproblem and the subproblem finds an extreme point $\c{x}_k$, and sends it to the master problem.

Dantzig-Wolfe decomposition is an implementation of the \textit{revised simplex} method.
Let vector $\c{w}$ and $\alpha$ denote the dual variables corresponding to equations $(\ref{dw-cons1})$ and $(\ref{dw-cons2})$, respectively.
We first need an initial solution to generate the simplex tableau. 
Suppose we have a basic feasible solution $\bm{\lambda}=(\bm{\lambda}_B,\bm{\lambda}_N)$ to system \massys, where $\bm{\lambda}_B$ and $\bm{\lambda}_N$ denote the basic and nonbasic variables, respectively.
The initial $(m+1)\times (m+1)$ basis inverse $\textbf{B}^{-1}$ hence will be known. 
The cost for each basic variable $\lambda_j$ is in fact $\hat{c}_j=\c{cx}_j$.
Therefore, we get $(\c{w},\alpha)=\c{\hat{c}_B} \c{B}^{-1}$, where $\c{\hat{c}_B}$ is the cost vector of the basic variables. Denoting $\c{\bar{b}}=\c{B}^{-1} \begin{pmatrix} \c{b}\\ 1 \end{pmatrix}$, we see the revised simplex tableau in Table \ref{tbl-simplex}.
  
\begin{table}
\centering
\setlength{\extrarowheight}{5.0pt}
\begin{tabular}{  m{1cm} | >{\centering\arraybackslash} p{1cm} |>{\centering\arraybackslash} p{1cm} |}
\multicolumn{1}{r}{}
 &  \multicolumn{1}{c}{BASIS}
 & \multicolumn{1}{c}{RHS} \\
\cline{2-3}
 & $(\c{w},\alpha)$ & $\c{\hat{c}_B} \c{\bar{b}}$ \\
\cline{2-3}
& $\c{B}^{-1}$ & $\c{\bar{b}}$ \\
\cline{2-3}
\end{tabular}
\caption{Simplex tableau. RHS stands for right-hand side.}
\label{tbl-simplex}
\end{table}  

The revised simplex proceeds by improving the current solution via finding an entering and a leaving variable. 
In other words, the set of basic and nonbasic variables exchange one element.
When such an exchange is not possible then the current solution is optimal.
The entering variable is in fact a variable $\lambda_k$ associated with extreme point $\c{x}_k$ for which $z_k-\hat{c}_k>0$, where 
$z_k=(\c{w},\alpha)  \begin{pmatrix} \c{Ax}_k\\ 1 \end{pmatrix}$ and $\hat{c}_k=\c{cx}_k$.

We observe that $z_k-\hat{c}_k=(\c{wA}-\c{c})\c{x}_k+\alpha$ denotes the value of point $\c{x}_k$ with respect to current costs $\c{wA}-\c{c}$ and dual variable $\alpha$.
In order to find such a point, we solve the following subproblem which gives us the required index or tells that the current solution is optimal when the maximum value is zero.

  \begin{alignat*}{2}
    \text{Maximize} \quad &  (\c{wA}-\c{c})\c{x} + \alpha  \\
    \text{subject to} \quad & \c{x} \in X 
  \end{alignat*}
  
Notice that the objective function contains a constant and therefore it can be replaced by $(\c{wA}-\c{c})\c{x}$. 
Assuming that $\c{x}_k$ is the optimal solution to the program above, the revised simplex method goes on as follows. If $z_k-\hat{c}_k=0$ then the algorithm stops and the last solution to the master problem is an optimal of the overall problem. 

If $z_k-\hat{c}_k>0$ the master problem proceeds as follows. 
Let $\c{y}_k=\c{B}^{-1}\begin{pmatrix} \c{Ax}_k\\ 1 \end{pmatrix}$, the entering column then will be $\begin{pmatrix} z_k-\hat{c}_k\\ \c{y}_k \end{pmatrix}$.
In order to find the leaving column, let index $r$ be determined as follows:

\begin{displaymath}
\displaystyle \frac{\bar{b}_r}{y_{rk}}=\displaystyle \minimum_{1\leq i\leq m+1}\Bigg\{\frac{\bar{b}_i}{y_{ik}}:y_{ik}>0\Bigg\}.
\end{displaymath}
We pivot at $y_{rk}$ which will update the dual variables, the basis inverse, and the right-hand side.
More specifically, \emph{pivoting} on $y_{rk}$ can be stated as follows.

\begin{enumerate} \label{pivoting}
\item Divide row $r$ by $y_{rk}$.

\item For $i=1, \ldots m$ and $i \neq r$, update the $i$th row by adding to it $-y_{ik}$ times the new $r$th row.

\item Update row zero by adding to it $z_k-\hat{c}_k$ times the new $r$th row.

\end{enumerate}
After pivoting, the column $\lambda_k$ is deleted and the algorithm repeats.

An important observation about the DW principle is as follows.
The DW principle expects the Subproblem to return any point $\c{x} \in X$ where $(\c{wA}-\c{c})\c{x}+\alpha > 0$ and it does not impose any other specific requirement on the selected point.
We shall use this observation in our adaptation of the method.

Another observation is that, in each iteration, the master program finds the best solution using known extreme points. 
This is done in an organized manner as described above.

\section{Applying Dantzig-Wolfe Decomposition} \label{sec-int-dw}

A wide range of combinatorial optimization problems can be formulated using the integer program  
$\max \big\{\c{cx}\ |\ \c{x}\in P\,  \text{and integer}\big\}$.
Recall, $P = \big\{\c{x} \in \mathbb{R}^n \ |\ \c{Ax} \leq \c{b}\ \ \& \ \c{x} \geq \c{0}\big\}$.
For example, in combinatorial auctions, $\c{c}$ denotes the accumulated valuations of the players, and the integer program therefore expresses the welfare maximization objective subject to feasibility constraints encoded as $P$.

Usually, using simplex method or other standard linear programming techniques, first a relaxed linear program of the integer program above is solved:

  \begin{align}
    \text{Maximize} \quad &  \c{cx}  \\
	 \text{subject to} \quad &\c{Ax} \leq \c{b} \ &   \label{lp-cns}\\
	 & \c{x} \geq \c{0} &    \label{lp-cns-pos}
  \end{align}
  
Notice, constraints (\ref{lp-cns}) and (\ref{lp-cns-pos}) together are equivalent to $\c{x} \in P$.  
Next, the solution is rounded to an integer solution at the expense of a value loss or slightly violating the constraints.
Given subroutine \ax, we wish to find a solution to the linear program above as well as a convex decomposition of it into integer points. 
We wish to achieve both goals at once.
Recall, subroutine \ax\ returns for any cost vector $\c{c}$ an integer point $\X\in Q$ such that $\c{c}\X \geq \c{c} \c{x}$ for any $\c{x}$ in $P$.

Generally speaking, Dantzig-Wolfe principle uses the fact that if we relax some constraints and obtain a simpler polyhedra then the solution to the original problem can be written as a convex combination of extreme points of the simpler polyhedra. The simplicity refers to the fact that the extreme points of the new polyhedra can be found more easily than those of the original problem.
While Dantzig-Wolfe principle is useful when the underlying constraints are decomposable into simpler regions, we use it in a slightly different manner by looking at $Q$ as the polyhedra over which we can efficiently optimize.
Recall, $Q$ denotes the convex hull of a finite set of integer points.
To apply the idea of Dantzig-Wolfe decomposition to the problem \lp, we add a new constraint to the program.
  \begin{align}
	 & \c{x} \in Q &    \label{lp-cns-q}
  \end{align}
  
We will show that an optimal solution to problem \xlp\ is also an optimal solution to \lp, and thus adding the new constraint is harmless.
We represent $\c{x} \in Q$ as a convex combination of extreme points of $Q$.
Recall, $\mathcal{I}$ denote the index set for integer points in $Q$, and $\big\{\X_j\big\}_{j \in \mathcal{I}}$ is the set of integer points in $Q$. 
Substitute $\c{x}$ with its convex combination of extreme points of $Q$, then program \xlp\ can be transformed into the following program where the variables are $\big\{\lambda_j\big\}_{j \in \mathcal{I}}$.

  \begin{align}
    \text{Maximize} \quad &  \sum_{j \in \mathcal{I}} (\c{c}\X_j)\lambda_j  \\
	 \text{subject to} \quad &\sum_{j \in \mathcal{I}} (\c{A}\X_j)\lambda_j + \c{s} = \c{b}\\	 
	 &\sum_{j \in \mathcal{I}} \lambda_j=1  \\
	& \lambda_j \geq 0 \quad \forall j\in \mathcal{I} \\
	 &\c{s} \geq \c{0}.  
  \end{align}

The linear program \lsip\ is the \textit{master problem}. 
Notice, we have added slack variables $\c{s} \in \mathbb{R}_+^m$ to convert the inequalities into equality as needed by the \dw\ principle (see Section \ref{sec:dwd}).
The \dw\ \textit{subproblem} is defined below. 

  \begin{align} 
    \max_{j \in \mathcal{I}} \ &  (\c{c}+\c{wA})\X_j  
  \end{align}

While our problem is a maximization problem, the procedure in Section \ref{sec:dwd} is presented for a minimization problem, thus we substitute the $-\c{c}$ with $\c{c}$ in the objective function of the subproblem. 
That means, we change the objective function of the master problem from $\max \sum_{j \in \mathcal{I}} (\c{c}\X_j)\lambda_j$ to $\min\sum_{j \in \mathcal{I}} (-\c{c}\X_j)\lambda_j$, and apply the theory provided in Section \ref{sec:dwd}.

We assume $\c{0} \in Q$ thus the initial basic solution is simply defined by letting $\lambda_0$, the variable corresponding to point $\X=\c{0}$, equal $1$ ($\lambda_0=1$), and letting $\c{s}=\c{b}$.\footnote{Note that $\c{0} \in Q$ is a consequence of the assumption that $Q$ is a packing polytope.} 
Assuming that program \dwsub\ can be efficiently solved for any cost vector, then we are exactly following the DW principle, and therefore, we can successfully solve the overall problem as DW principle does this.
However, program \dwsub\ is an integer program and solving it may not be computationally tractable.

To address this issue, we propose using subroutine $\mathcal{A}$ to approximate program \dwsub.
Let $(\bar{\c{w}},\bar{\alpha})$ be the last dual variables calculated by the master program.
In each iteration, we call subroutine $\mathcal{A}$ with current cost vector $\c{c}+\bar{\c{w}}\c{A}$ to find a point $\X_k \in Q$ to pass to the master problem.
This substitution seemingly comes at the expense of stopping at a local optimum, as explained in the following.

As long as the algorithm continues by using the points returned by subroutine $\mathcal{A}$, we are exactly running DW principle. 
Recall the important observation that in DW principle, the subproblem need not be completely optimized and any point $\X_k$ with $(\c{c}+\bar{\c{w}}\c{A})\X_k+\bar{\alpha} > 0$ suffices to proceed.
However, there might be an iteration in which there exists a point $\X' \in Q$ for which we have $(\c{c}+\bar{\c{w}}\c{A})\X'+\bar{\alpha} >0$, but for the integer point $\X_k$ returned by subroutine $\mathcal{A}$, we have $(\c{c}+\bar{\c{w}}\c{A})\X_k+\bar{\alpha} \leq 0$. 
Therefore, DW stops at a suboptimal point.
Nevertheless, below, we argue that this cannot happen. 
That means as long as DW has not reached the optimum to problem \lp, subroutine $\mathcal{A}$, given the current cost vector $(\c{c}+\bar{\c{w}}\c{A})$, returns a point $\X_k$ with $(\c{c}+\bar{\c{w}}\c{A})\X_k+\bar{\alpha} >0$.

Let $\c{x}^*$ denote the optimal solution to program \lp.
It is instructive to see what the master step would do if the subproblem passes the point $\c{x}^*$, rather than an integer point, to the master problem.
While we do not know such an optimal point, but we know that such a point exists and this suffices for our reasoning.
We argue that the master step will set $\c{\lambda}_{x^*}=1$ and $\c{\lambda}_j=0$ for all other $\c{\lambda}_j$'s which are currently in the base.
In other words, the master program returns the best possible convex combination which is in fact $\lambda_{x^*}=1$. 
This is discussed in the following observation.

\begin{observation} \label{obs-mas-x-star}
Let $\c{x}^*$ denote the optimal solution to program \lp.
If supposedly the subproblem in any iteration passes $\c{x}^*$ to the master problem, then the master step will set $\c{\lambda}_{x^*}=1$ and $\c{\lambda}_j=0$ for all other $\c{\lambda}_j$'s which are currently in the base.
\end{observation}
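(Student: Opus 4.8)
The plan is to exhibit $\lambda_{x^*}=1$ (with all remaining $\lambda_j=0$ and slack $\c{s}=\c{b}-\c{A}\c{x}^*$) as a feasible solution of the master problem \lsip, to compute that its objective value equals $\c{c}\c{x}^*$, and then to argue that no feasible solution of \lsip\ can achieve a larger value, so that this solution is optimal. Since the subproblem is imagined to hand the column $\c{x}^*$ to the master, I treat $\c{x}^*$ as an ordinary column with associated variable $\lambda_{x^*}$, cost coefficient $\c{c}\c{x}^*$, and constraint column $(\c{A}\c{x}^*,1)^{\top}$; the argument will not require $\c{x}^*\in Q$, only $\c{x}^*\in P$.

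First I would verify feasibility. Setting $\lambda_{x^*}=1$, all other $\lambda_j=0$, and $\c{s}=\c{b}-\c{A}\c{x}^*$ satisfies the convexity row $\sum_{j}\lambda_j=1$ trivially, and the coupling rows $\sum_{j}(\c{A}\X_j)\lambda_j+\c{s}=\c{A}\c{x}^*+\c{s}=\c{b}$. Because $\c{x}^*$ is the optimum of \lp\ it lies in $P$, hence $\c{A}\c{x}^*\leq\c{b}$ and the slack $\c{s}\geq\c{0}$; all nonnegativity requirements therefore hold. The objective value of this solution is simply $\c{c}\c{x}^*$.

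The heart of the proof is the matching upper bound. I would take an arbitrary feasible solution $(\{\lambda_j\},\c{s})$ of \lsip\ and set $\c{x}:=\sum_{j}\X_j\lambda_j$. The coupling rows together with $\c{s}\geq\c{0}$ give $\c{A}\c{x}=\c{b}-\c{s}\leq\c{b}$; each integer point satisfies $\X_j\in\mathbb{Z}_+^n$ so $\X_j\geq\c{0}$, and with $\lambda_j\geq0$ this yields $\c{x}\geq\c{0}$; and the convexity row makes $\c{x}$ a genuine convex combination. Hence $\c{x}\in P$, so by optimality of $\c{x}^*$ over $P$ we have $\c{c}\c{x}\leq\c{c}\c{x}^*$. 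Since the master objective equals $\sum_{j}(\c{c}\X_j)\lambda_j=\c{c}\c{x}$, every feasible value is at most $\c{c}\c{x}^*$. Combined with the previous paragraph, $\lambda_{x^*}=1$ attains this bound and is therefore optimal; and because $\lambda_{x^*}=1$ forces $\sum_{j\neq x^*}\lambda_j=0$ through the convexity row, all other $\lambda_j$ vanish, which is exactly the claimed outcome.

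I expect the only delicate point to be the upper-bound step, specifically the observation that the slack variables are what pull the convex combination back inside $P$: an individual integer point $\X_j$ need not lie in $P$ (it may violate $\c{A}\X\leq\c{b}$), yet the coupling constraints of \lsip\ guarantee that any feasible aggregate $\c{x}$ does satisfy $\c{A}\c{x}\leq\c{b}$, while nonnegativity of both the slacks and the integer coordinates secures $\c{x}\geq\c{0}$. A secondary wrinkle is that $\c{x}^*$ need not be the unique maximizer over $P$, so the master optimum need not be unique; but any optimal solution corresponds to some $\c{x}\in P$ with $\c{c}\c{x}=\c{c}\c{x}^*$, and $\lambda_{x^*}=1$ realizes this value while the convexity row drives the other weights to zero, so the stated basic solution is the natural optimum that the revised simplex step returns.
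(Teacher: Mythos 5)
Your proof is correct in substance but takes a genuinely different route from the paper's. The paper argues at the level of simplex mechanics: it notes that the constraints at any iteration are the initial constraints after a sequence of row operations, so the point $\c{\lambda}_{x^*}=1$ remains feasible in every tableau, and then asserts that the pivot will push the entering variable $\c{\lambda}_{x^*}$ all the way to $1$ because that produces the largest objective increase. You instead argue globally: you exhibit $\c{\lambda}_{x^*}=1$, $\c{s}=\c{b}-\c{A}\c{x}^*$ as feasible for the master problem \lsip, and you prove a matching upper bound by observing that the aggregation $\c{x}=\sum_j \lambda_j \X_j$ of any feasible master solution lies in $P$ (the slacks give $\c{A}\c{x}\leq\c{b}$, nonnegativity of the $\X_j$ and $\lambda_j$ gives $\c{x}\geq\c{0}$), so every master objective value is at most $\c{c}\c{x}^*$. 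That aggregation argument is the real content and is cleaner than the paper's: it is exactly the ingredient needed in the subsequent theorem, where one only has to conclude that the reduced cost of $\c{\lambda}_{x^*}$ is positive whenever the current basic solution is suboptimal, and your optimality bound delivers this directly via standard LP theory.

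One caveat, which you partially acknowledge: the observation as stated is about what ``the master step will set,'' i.e., the outcome of a single revised-simplex pivot, and a single pivot stops at the minimum ratio, which is not automatically $1$, nor must all other basic $\c{\lambda}_j$'s vanish simultaneously at that point; moreover the master optimum need not be unique, so ``the optimum is $\c{\lambda}_{x^*}=1$'' and ``the pivot lands there'' are not the same claim. You prove the former, not the latter. The paper's own proof is no more rigorous on this point (it asserts the pivot reaches $\c{\lambda}_{x^*}=1$ from feasibility alone), so this is a looseness in the statement rather than a defect specific to your argument; for the role the observation plays in Theorem~2, your version is sufficient and arguably preferable.
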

\begin{proof}
First, we observe that if 
the first subproblem (right after the initialization) passes $\c{x}^*$ to the master problem, then the master step will set $\c{\lambda}_{x^*}=1$ and $\c{\lambda}_0=0$.
Clearly, in the solution $\c{s}$ needs to be evaluated accordingly.
Second, by looking more closely at what simplex does in each iteration, we observe that in any further iteration, if the subproblem passes $\c{x}^*$ to the master problem, the master step will set $\c{\lambda}_{x^*}=1$ and $\c{\lambda}_j=0$ for all other $\c{\lambda}_j$'s. 

The simplex method, in each iteration, performs a set of row operations on the constraints when it pivots (see pivoting steps in Section \ref{sec:dwd}).
The constraints, in any iteration, are thus the initial constraints after a series of row operations.
This will certify that the aforementioned solution ($\c{\lambda}_{x^*}=1$) will be feasible in any further iteration.
If the subproblem passes $\c{x}^*$ to the master problem, our entering variable will be $\c{\lambda}_{x^*}$.
The simplex algorithm then increases the entering variable $\c{\lambda}_{x^*}$ as much as one basic variable gets zero.
However, as discussed, the solution $\c{\lambda}_{x^*}=1$ is feasible, and it is possible to increase $\c{\lambda}_{x^*}$ up to $1$ and set all other $\c{\lambda}_j$'s to zero.
The algorithm will behave as such to produce the highest increase in the objective value, the desired conclusion.
\end{proof}

\begin{theorem} \label{lem-dw-non-stop}
If the subproblem \dwsub\ calls subroutine \ax\ to return an integer point in each iteration, the \dw\ principle never stops until it gets to an optimal solution to problem \lp.
\end{theorem}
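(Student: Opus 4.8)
The plan is to prove the statement in contrapositive form: I will show that \emph{whenever the current master solution is not yet optimal for \lp}, the integer point $\X_k$ returned by \ax\ on the current cost vector satisfies the revised-simplex entering test $(\c{c}+\bar{\c{w}}\c{A})\X_k+\bar{\alpha}>0$, so \dw\ is forced to pivot rather than halt. Since \dw\ halts exactly when no generated column passes this test, any stop must then occur at an optimal solution of \lp, which is the assertion. (I will not, and need not, bound the number of iterations; only the absence of a premature stop is at issue.) The whole argument reduces to comparing the reduced cost of $\X_k$ with that of the hypothetical column $\c{x}^*$, the optimum of \lp.

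First I would record that $\c{x}^*\in P\subseteq Q$, so that the reduced cost $r(\c{x}^*):=(\c{c}+\bar{\c{w}}\c{A})\c{x}^*+\bar{\alpha}$ is a well-defined quantity even though $\c{x}^*$ is fractional and is never actually generated by \ax. The containment $P\subseteq Q$ is immediate from the defining guarantee of \ax: for every cost vector the support function of $Q$ dominates that of $P$, and this is equivalent to $P\subseteq Q$. Next I would establish the key inequality $r(\c{x}^*)>0$ under the non-optimality hypothesis. This is where Observation~\ref{obs-mas-x-star} does the work: if $\c{x}^*$ were passed as an entering column, the master step would settle at $\c{\lambda}_{x^*}=1$ with objective value $\c{c}\c{x}^*$, i.e.\ the \lp\ optimum. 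Because the current master solution is assumed strictly suboptimal, introducing $\lambda_{x^*}$ strictly improves the master objective; in revised-simplex terms a nonbasic column improves the current basic feasible solution precisely when its reduced cost is favorable, so $r(\c{x}^*)>0$ follows.

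Then the defining guarantee of \ax\ closes the gap immediately. Applied to the current cost vector $\c{c}+\bar{\c{w}}\c{A}$, the subroutine returns $\X_k\in Q$ with $(\c{c}+\bar{\c{w}}\c{A})\X_k\geq(\c{c}+\bar{\c{w}}\c{A})\c{x}$ for every $\c{x}\in P$, and in particular for $\c{x}=\c{x}^*$. Adding the constant $\bar{\alpha}$ to both sides gives
\[
(\c{c}+\bar{\c{w}}\c{A})\X_k+\bar{\alpha}\ \geq\ (\c{c}+\bar{\c{w}}\c{A})\c{x}^*+\bar{\alpha}\ =\ r(\c{x}^*)\ >\ 0,
\]
so $\X_k$ passes the entering test and \dw\ does not stop. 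It is precisely this step that rules out the failure mode flagged just before the theorem, where \ax\ might return a point of nonpositive reduced cost while some better point $\X'$ exists: the \ax-guarantee is stated against \emph{all} of $P$, which contains $\c{x}^*$, so the returned point can never be worse than $\c{x}^*$ with respect to the current costs.

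The hard part will be making the implication ``not optimal $\Rightarrow r(\c{x}^*)>0$'' airtight, since $\c{x}^*$ is only a hypothetical aggregate column and one must argue at the level of simplex reduced costs rather than merely geometrically. The cleanest route I foresee is to invoke Observation~\ref{obs-mas-x-star} to pin down exactly what the master does with $\lambda_{x^*}$, and then appeal to the standard optimality characterization of the revised simplex. A secondary point to handle carefully is the sign and admissibility of the cost vector $\c{c}+\bar{\c{w}}\c{A}$ fed to \ax: this is harmless under the packing assumption on $Q$ together with $\c{c}\geq\c{0}$, which guarantees that \ax\ accepts the vector and that its guarantee holds as stated.
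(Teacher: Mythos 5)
Your proposal is correct and takes essentially the same route as the paper's own proof: both arguments hinge on Observation~\ref{obs-mas-x-star} to conclude that the hypothetical column $\c{x}^*$ has strictly positive reduced cost at any suboptimal basis, and then transfer that positivity to the returned point $\X_k$ via the guarantee $(\c{c}+\bar{\c{w}}\c{A})\X_k \geq (\c{c}+\bar{\c{w}}\c{A})\c{x}^*$ of subroutine \ax. The only difference is presentational --- you argue the contrapositive while the paper argues by contradiction --- and your added remarks on $P\subseteq Q$ and the admissibility of the cost vector $\c{c}+\bar{\c{w}}\c{A}$ are harmless elaborations of points the paper leaves implicit.
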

\begin{proof}

Let $\c{x}^*$ denote the optimal solution to program \lp.
Assume the algorithm stops at a suboptimal point: $\sum_{j \in \mathcal{I}} (\c{c}\X_j)\lambda_j < \c{c}\c{x}^*$. 
Let $(\bar{\c{w}},\bar{\alpha})$ be the last dual variables calculated by the master program.
Let $\X_k$ be the point returned by subroutine $\mathcal{A}$, given cost vector $(\c{c}+\bar{\c{w}}\c{A})$, in the last iteration.
We must have $(\c{c}+\bar{\c{w}}\c{A})\X_k+\bar{\alpha} \leq 0$ because \dw\ has stopped.

If supposedly the subproblem in the last iteration passes $\c{x}^*$ to the master problem, according to Observation \ref{obs-mas-x-star}, the master step will increase $\c{\lambda}_{x^*}$ up to $1$ and set all other $\c{\lambda}_j$'s to zero.
Since we assumed the algorithm has stopped at a suboptimal point,  by setting $\c{\lambda}_{x^*}=1$ the objective value will increase. 
If entering $\c{\lambda}_{x^*}$ improves the objective value, we must have  $(\c{c}+\bar{\c{w}}\c{A})\c{x}^*+\bar{\alpha} > 0$ from the theory of \dw\ principle provided in Section \ref{sec:dwd}: if  $(\c{c}+\bar{\c{w}}\c{A})\c{x}^*+\bar{\alpha} \leq 0$ then entering variable $\c{\lambda}_{x^*}$ cannot improve the objective value.

By the property of the subroutine, we have $(\c{c}+\bar{\c{w}}\c{A})\X_k \geq (\c{c}+\bar{\c{w}}\c{A}) \c{x^*}$.
Thus, in the last iteration, we must have $(\c{c}+\bar{\c{w}}\c{A})\X_k+\bar{\alpha} > 0$. 
This contradicts our assumption that $(\c{c}+\bar{\c{w}}\c{A})\X_k+\bar{\alpha} \leq 0$.
Consequently, as long as we have not reached the optimum, the subroutine returns a point  $\X_k$ with $(\c{c}+\bar{\c{w}}\c{A})\X_k+\bar{\alpha} > 0$.
This completes the proof.
\end{proof}

We draw the conclusion that substituting program \dwsub\ with subroutine $\mathcal{A}$ is harmless.
Therefore, we have shown that finding a convex decomposition of maximum value is indeed equivalent to solving a linear program via \dw\ principle.
Let us call the method \textit{integer \dw}.

\section{Benders Decomposition} \label{sec:benders}
It is known that the Dantzig-Wolfe Decomposition has an equivalent decomposition technique namely \emph{Benders decomposition} \cite{bazaraa2011linear}.
Benders decomposition is a \emph{row generation technique} in contrast with the Dantzig-Wolfe column generation procedure.
Sometimes, working with Benders decomposition has advantages over Dantzig-Wolfe decomposition.
We explain how to apply the Benders algorithm to our problem.
Later, we discuss the advantages of the method.

Recall, polytope $Q$ is a bounded polyhedra.
Hence, there exist matrix $\c{D} \in \mathbb{R}^{m' \times n}$ and $\c{d} \in \mathbb{R}^{m'}$ such that $Q = \big\{\c{x} \in \mathbb{R}^n \ |\ \c{Dx} \leq \c{d}\ \ \& \ \c{x} \geq \c{0}\big\}$.
We add constraint $\c{x} \in Q$ to program \lp\ and work with the new program.
We will see that adding this constraint has no influence on the region of feasible solutions to program \lp.
Following the standard procedure \cite{bazaraa2011linear}, we can write the Benders decomposition for this new program. 
The \emph{Benders master problem} will be as follows.
  \begin{align}
    \text{Maximize} \quad &  z  \\
	 \text{subject to} \quad &z \leq \c{wb}-(\c{c}+\c{wA})\X_j \quad \forall j\in \mathcal{I} &  \label{bms-int-cns} \\
	 & \c{w} \leq \c{0} \\
	 & z \quad \text{unrestricted.} &    
  \end{align}

The variables of the master problem are $z$ and $\c{w}$.
Variable vector $\c{w}$ is the vector of dual variables associated to the constraints (\ref{lp-cns}).
The Benders master problem has exponentially many constraints, thus it is inconvenient to solve directly.
Hence, we maintain only a few of the constraints (\ref{bms-int-cns}).
Assuming $\c{0} \in Q$, we start with only one constraint: $z \leq \c{wb}-(\c{c}+\c{wA})\c{0}=\c{wb}$.
Notice, we can use any $\X_j \in Q$ to start with.
We solve the master problem and let $(\bar{z},\bar{\c{w}})$ be the solution.
The value of $\bar{z}$ is an upper bound on the optimal value to the master problem.
If $(\bar{z},\bar{\c{w}})$ satisfies constraints (\ref{bms-int-cns}) for all $j \in \mathcal{I}$, then $(\bar{z},\bar{\c{w}})$ is optimal for the master problem.
We can check constraints (\ref{bms-int-cns}) by examining if $\bar{z} \leq \bar{\c{w}}\c{b}-\max_{j \in \mathcal{I}}(\c{c}+\bar{\c{w}}\c{A})\X_j$.

Thus, the \emph{Benders subproblem} will be as the following.
  \begin{align} \label{benders-sub}
    \max_{j \in \mathcal{I}} \ &  (\c{c}+\bar{\c{w}}\c{A})\X_j  
  \end{align}

Note that the Benders subproblem is also the subproblem solved by the Dantzig-Wolfe decomposition.
Furthermore, the Benders master problem is the dual to the Dantzig-Wolfe master problem \lsip.
The Benders subproblem, in each iteration, is solved by calling subroutine $\mathcal{A}$ with cost vector $\c{c}+\bar{\c{w}}\c{A}$.
If the subproblem returns $\X_k$ that violates the constraints (\ref{bms-int-cns}): $\bar{z} > \bar{\c{w}}\c{b}-(\c{c}+\bar{\c{w}}\c{A})\X_k$, we can generate the new constraint $z \leq \c{w}\c{b}-(\c{c}+\c{w}\c{A})\X_k$, and add it to the current master program, and reoptimize.
We repeat this process until the solution returned by  subroutine $\mathcal{A}$ does not violate the constraints.
We claim that at this iteration, the value of $\bar{z}$ is the optimal value to the master problem.

The Benders decomposition provides a more concise proof that the decomposition techniques in companion with the subroutine \ax\ work correctly.

\begin{theorem} \label{lem-dw-non-stop}
If the subproblem (\ref{benders-sub}) calls subroutine \ax\ to return an integer point in each iteration, the Benders algorithm never stops until it reaches an optimal solution to problem \lp.
\end{theorem}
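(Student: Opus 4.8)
The plan is to mirror the contradiction argument used for the Dantzig-Wolfe version of this theorem, but to run it on the dual side, exploiting that the Benders master is the dual of the master problem \lsip. First I would fix notation: let $\c{x}^*$ be the optimal solution of program \lp, and let $z^*$ denote the optimal value of the \emph{full} Benders master problem, i.e. with all constraints (\ref{bms-int-cns}) over $j \in \mathcal{I}$. Invoking the duality between the Benders master and \lsip\ (and the min/max conversion used throughout), I would record that $z^* = -\c{c}\c{x}^*$, and that the relaxed master actually maintained by the algorithm, carrying only a subset of the constraints (\ref{bms-int-cns}), is a relaxation of the full master; hence its optimum $\bar z$ satisfies $\bar z \geq z^*$, so $\bar z$ is always an upper bound.

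Next I would assume, for contradiction, that the algorithm stops at a suboptimal point, i.e.\ $\bar z > z^*$. Termination means the point $\X_k$ returned by subroutine \ax\ on the cost vector $\c{c}+\bar{\c{w}}\c{A}$ does not violate its constraint, so
\begin{displaymath}
\bar z \;\leq\; \bar{\c{w}}\c{b} - (\c{c}+\bar{\c{w}}\c{A})\X_k .
\end{displaymath}
The crux is then to replace $\X_k$ by $\c{x}^*$ using the defining guarantee of the subroutine: since $\c{x}^* \in P$, \ax\ returns a point with $(\c{c}+\bar{\c{w}}\c{A})\X_k \geq (\c{c}+\bar{\c{w}}\c{A})\c{x}^*$, whence
\begin{displaymath}
\bar z \;\leq\; \bar{\c{w}}\c{b} - (\c{c}+\bar{\c{w}}\c{A})\c{x}^* \;=\; \bar{\c{w}}(\c{b} - \c{A}\c{x}^*) - \c{c}\c{x}^* .
\end{displaymath}

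To close the argument I would use the two sign facts that make the cross term vanish: the master constraint forces $\bar{\c{w}} \leq \c{0}$, while feasibility of $\c{x}^*$ in $P$ gives $\c{b} - \c{A}\c{x}^* \geq \c{0}$, so $\bar{\c{w}}(\c{b} - \c{A}\c{x}^*) \leq 0$. This yields $\bar z \leq -\c{c}\c{x}^* = z^*$, contradicting $\bar z > z^*$. Therefore termination forces $\bar z = z^*$, i.e.\ the recovered solution is optimal for \lp; contrapositively, as long as the optimum has not been reached, the point returned by \ax\ strictly violates a constraint and the algorithm proceeds.

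I expect the main obstacle to be the same conceptual point as in the Dantzig-Wolfe case: subroutine \ax\ does not solve the Benders subproblem (\ref{benders-sub}) exactly over $Q$ --- it only guarantees beating every point of $P$ --- so a priori a non-violating $\X_k$ need not certify that \emph{no} point of $Q$ violates its constraint. The resolution, and the step I would treat most carefully, is that one never needs the exact maximizer over $Q$: comparing against the single feasible point $\c{x}^* \in P$ already suffices, precisely because the sign conditions $\bar{\c{w}} \leq \c{0}$ and $\c{A}\c{x}^* \leq \c{b}$ annihilate the term $\bar{\c{w}}(\c{b}-\c{A}\c{x}^*)$. I would also be careful with the sign bookkeeping from the min/max conversion when identifying $z^*$ with $-\c{c}\c{x}^*$.
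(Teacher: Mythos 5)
Your proposal is correct and follows essentially the same contradiction argument as the paper's proof: both use the termination condition $\bar z \leq \bar{\c{w}}\c{b}-(\c{c}+\bar{\c{w}}\c{A})\X_k$, the subroutine guarantee against $\c{x}^* \in P$, the sign facts $\bar{\c{w}} \leq \c{0}$ and $\c{A}\c{x}^* \leq \c{b}$, and the duality identification $z^* = -\c{c}\c{x}^*$. The only difference is cosmetic — you derive $\bar z \leq z^*$ directly while the paper contradicts the termination inequality — so the two arguments are the same chain of inequalities read in opposite directions.
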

\begin{proof}

Let $\c{x}^*$ denote an optimal solution to problem \lp.
Let $z^*$ denote an optimal value of the Benders master problem.
We have $z^* = (-\c{c})\c{x}^*$ by the construction of the Benders master problem, and the duality theorem.
Let $\bar{z}$ be the final solution to the master problem and $\X_k$ the solution returned by subroutine \ax\ when the Benders algorithm stops.
Since the algorithm stops, we must have $\bar{z} \leq \bar{\c{w}}\c{b}-(\c{c}+\bar{\c{w}}\c{A})\X_k$.
We always have $z^* \leq \bar{z}$ because $\bar{z}$ is an upper bound on the optimal solution to the master problem.
Assume $\bar{z}$ is not optimal: $\bar{z} > z^*$.
Remember, by the definition of subroutine \ax, we have $(\c{c}+\bar{\c{w}}\c{A})\X_k \geq (\c{c}+\bar{\c{w}}\c{A})\c{x}^*$. 
Therefore,

\begin{displaymath}
\begin{array}{llll}
& \c{A}\c{x}^* & \leq \c{b} & \text{since}\ \c{x}^*\ \text{is a solution to}\\

&  & & \text{problem \lp}\\

\Rightarrow{} & \bar{\c{w}}\c{A}\c{x}^* & \geq \bar{\c{w}}\c{b} & \text{since}\  \bar{\c{w}} \leq \c{0}\\

\Rightarrow & -\c{c}\c{x}^* & \geq \bar{\c{w}}\c{b} -\c{c}\c{x}^* -\bar{\c{w}}\c{A}\c{x}^* & \\

\Rightarrow & -\c{c}\c{x}^* & \geq \bar{\c{w}}\c{b} -(\c{c}+\bar{\c{w}}\c{A})\X_k & \text{since}\ (\c{c}+\bar{\c{w}}\c{A})\X_k \geq (\c{c}+\bar{\c{w}}\c{A})\c{x}^*   \\

\Rightarrow & \bar{z} & > \bar{\c{w}}\c{b} -(\c{c}+\bar{\c{w}}\c{A})\X_k & \text{since}\ \bar{z} > z^* = -\c{c}\c{x}^*

\end{array}
\end{displaymath}

But, this contradicts $\bar{z} \leq \bar{\c{w}}\c{b}-(\c{c}+\bar{\c{w}}\c{A})\X_k$. 
The contradiction arises from the assumption that $\bar{z}$ is not optimal. 
Thus, when the algorithm stops, we have the optimal solution, the desired conclusion. 
\end{proof}

After solving the Benders master problem, we can use the provided integer solutions and solve the restricted primal to obtain a convex decomposition.

Another advantage of the Benders decomposition arises from the fact that in each iteration, we optimize an LP in the master step.
Solving an LP is sometimes more convenient than the implementation of the pivoting steps done in each iteration in the master step of the \dw\ principle.

\subsection{Polynomial Runtime Using Ellipsoid}
It is instructive to note that Theorem \ref{lem-dw-non-stop} implies that if the integer solution returned by subroutine $\mathcal{A}$ does not violate constraints (\ref{bms-int-cns}), then the current master solution is optimal.
Exploiting this fact, we can use the ellipsoid method to solve problem \bnlp\ to certify a polynomial runtime which might be of theoretical interest.
To use the ellipsoid method, we need to implement a \emph{separation oracle}.
Recall that a separation oracle, given a solution, either confirms that it is a feasible solution, or returns the constraint violated by the solution.
Using subroutine $\mathcal{A}$ as the separation oracle, as long as we find a violated constraint, we cut the current ellipsoid and continue.
When the subroutine $\mathcal{A}$ cannot return a violating constraint, according to Theorem \ref{lem-dw-non-stop}, the algorithm has reached the optimum.

\section{Application of the Method in 
Mechanism Design} \label{sec:dw-apps}
\subsection{The Framework Proposed by Lavi and Swamy}
Let $X = \big\{\c{x} \in \mathbb{R}^n \ |\ \c{Ax} \leq \c{b}\ \&\ \c{x} \geq \c{0}\big\}$ denote the underlying polytope of a linear program, and $\c{x}^*$ denote an optimal solution to the program with respect to some cost vector.
The maximum ratio between the value of an integer program and its relaxation, with respect to all cost vectors, is called the integrality gap of the relaxation.
Assuming that, the integrality gap of $X$ is $\beta \geq 1$, and that a $\beta$ integrality-gap-verifier is given, Lavi and Swamy propose a method to decompose the scaled-down fractional solution $\frac{\c{x}^*}{\beta}$ into a convex combination of integer solutions \cite{lavi2011truthful}.
A $\beta$ integrality-gap-verifier is an algorithm that, given any cost vector, returns an integer solution whose value is at least $1/\beta$ times the optimal relaxed solution.

This decomposition technique was originally observed by Carr and Vempala \cite{carr2000randomized}, and later adapted by Lavi and Swamy to mechanism design problems provided that the underlying polytope of the relaxation of the problem has the packing property \cite{lavi2011truthful}.
The approach requires only a polynomial number of calls to the integrality-gap-verifier with respect to the number of positive components in $x^*$.
Yet, the approach strongly relies on the ellipsoid method, and hence it is more of theoretical importance than of practical use.

In order to view the LS framework in our setting, the integrality-gap-verifier is used as subroutine $\mathcal{A}$ and $X/\beta = \big\{\c{x}\ |\ \beta \c{x} \in X\big\}$ is treated as $P$ in our setting introduced in Section \ref{sec-setting}.
This way, the integer DW finds the maximum value in $X/\beta$ as well as its decomposition into integer points, both in one step.
This improves upon other implementations of the LS framework which require two steps to find the convex decomposition  \cite{lavi2011truthful,kraft2014fast,elbassioni2015towards}.

It is instructive to note that solving program \lsip, essentially defines a Maximal-In-Distributional-Range (MIDR) allocation rule.
An MIDR algorithm fixes a set of distributions over feasible solutions (the distributional range) independently of the valuations reported by the self-interested players, and outputs a random sample from the distribution that maximizes expected (reported) welfare \cite{Dobzinski09}.
Here, we optimize over a range which is independent of bidder's private information. 
The range is in fact the feasible region of the program: all probability distributions over integer solutions which satisfy constraints \lscr.
The range is obviously independent of bidders' valuations.

\subsection{Existing Fractional Point}
Sometimes a fractional point $\c{x}^* \in Q$ is present, and we wish to find a convex decomposition of $\c{x}^*$ into extreme points of $Q$.
This can happen when we use other methods to find a fractional point rather than linear programming.
Recall, we assume that $Q$ satisfy the packing property.

For this case, we can use the integer DW as follows.
Define $P=\big\{\c{x} \in \mathbb{R}^n\ |\ \c{x}\leq\c{x}^* \ \&\ \c{x} \geq 0 \big\}$ and let $\c{c}=\c{x}^*$.
Now, apply the integer DW. 
All arguments follow accordingly, assuming that a subroutine $\mathcal{A}$ with the following property is available.
Subroutine $\mathcal{A}$ will return for any cost vector $\c{c}$ an integer point $\X\in Q$ such that $\c{c}\X \geq \c{c} \c{x^*}$.
Because the number of constraints in $P$ is at most $n$, the resulting convex decomposition in this case is tight in terms of the number of integer points, according to the Carath{\'e}odory's theorem.


\section{Numerical Example for Integer DW} \label{sec:example-dw}
In this section, we apply the integer DW to an instance of multi-unit auctions to see how the method works. 
We relegate the details to the appendix.

\bibliographystyle{splncs03}
\bibliography{literature}

\begin{thebibliography}{10}
\providecommand{\url}[1]{\texttt{#1}}
\providecommand{\urlprefix}{URL }

\bibitem{arora2012multiplicative}
Arora, S., Hazan, E., Kale, S.: The multiplicative weights update method: a
  meta-algorithm and applications. Theory of Computing  8(1),  121--164 (2012)

\bibitem{bazaraa2011linear}
Bazaraa, M.S., Jarvis, J.J., Sherali, H.D.: Linear programming and network
  flows. John Wiley \& Sons (2011)

\bibitem{budish2013designing}
Budish, E., Che, Y.K., Kojima, F., Milgrom, P.: Designing random allocation
  mechanisms: Theory and applications. The American Economic Review  103(2),
  585--623 (2013)

\bibitem{carr2000randomized}
Carr, R., Vempala, S.: Randomized metarounding. In: Proceedings of the
  thirty-second annual ACM symposium on Theory of computing. pp. 58--62. ACM
  (2000)

\bibitem{chvatal1983linear}
Chv{\'a}tal, V.: Linear programming. WH Freeman and Company, New York (1983)

\bibitem{desrosiers2005primer}
Desrosiers, J., L{\"u}bbecke, M.E.: A primer in column generation. Springer
  (2005)

\bibitem{Dobzinski09}
Dobzinski, S., Dughmi, S.: On the power of randomization in algorithmic
  mechanism design. In: Foundations of Computer Science, 2009. FOCS'09. 50th
  Annual IEEE Symposium on. pp. 505--514. IEEE (2009)

\bibitem{dughmi2010truthful}
Dughmi, S., Ghosh, A.: Truthful assignment without money. In: Proceedings of
  the 11th ACM conference on Electronic commerce. pp. 325--334. ACM (2010)

\bibitem{elbassioni2015towards}
Elbassioni, K., Mehlhorn, K., Ramezani, F.: Towards more practical linear
  programming-based techniques for algorithmic mechanism design. In:
  Algorithmic Game Theory, pp. 98--109. Springer (2015)

\bibitem{khandekar2004lagrangian}
Khandekar, R.: Lagrangian relaxation based algorithms for convex programming
  problems. Ph.D. thesis, Indian Institute of Technology Delhi (2004)

\bibitem{kraft2014fast}
Kraft, D., Fadaei, S., Bichler, M.: Fast convex decomposition for truthful
  social welfare approximation. In: Web and Internet Economics, pp. 120--132.
  Springer (2014)

\bibitem{lavi2005truthful}
Lavi, R., Swamy, C.: Truthful and near-optimal mechanism design via linear
  programming. In: Foundations of Computer Science, 2005. FOCS 2005. 46th
  Annual IEEE Symposium on. pp. 595--604. IEEE (2005)

\bibitem{lavi2011truthful}
Lavi, R., Swamy, C.: Truthful and near-optimal mechanism design via linear
  programming. Journal of the ACM (JACM)  58(6), ~25 (2011)

\bibitem{nguyen2015assignment}
Nguyen, T., Peivandi, A., Vohra, R.: Assignment problems with
  complementarities. Tech. rep., Mimeo., February (2015)

\bibitem{tebboth2001computational}
Tebboth, J.R.: A computational study of {D}antzig-{W}olfe decomposition. Ph.D.
  thesis, University of Buckingham (2001)

\end{thebibliography}

\appendix

\section{Numerical Example for Integer DW} \label{sec:example-dw-apx}
In this section, we focus on applying the integer DW to an instance of multi-unit auctions. 
In multi-unit auctions, there is a set of $m$ identical items and a set of players. 
Each player $i$ has a valuation for any number of items denoted by $v_i(j)$ for getting $j$ items where $1\leq j\leq m$.
The goal is to maximize social welfare by distributing items among bidders.

The LP relaxation for this class of problems is as follows. 
Let $x_{ij}$ denote if $j$ units is assigned to bidder $i$.

  \begin{align}
    \text{Maximize} \quad &  \sum_{i,j} v_i(j) x_{ij} \tag{MU-P}  \\
	 \text{subject to} \quad &\sum_j x_{ij} \leq 1 \quad \text{for each player } i   \\
	 & \sum_{i,j} j\cdot x_{ij} \leq m     \\
	 & 0\leq x_{ij}\leq 1  \quad \text{for each } i,j    
  \end{align}
  
Lavi and Swamy present a greedy algorithm which returns for any valuation $v$ an integer solution that is at least as good as half of the optimal fractional solution to $\text{MU-P}$ with respect to $v$  \cite{lavi2011truthful}.
Thus, we have a $2$ integrality-gap-verifier algorithm for $\text{MU-P}$.
This greedy algorithm will serve as the subroutine in the integer DW, and is called \ax.

We give a short example to demonstrate the proposed convex decomposition method. 
Suppose a simple multi-unit auction with 3~players and 4~identical items. 
The following valuation vectors $v_i(j)$ are given for each player $i$ and quantity $j$:

\begin{align*}
	\begin{array}{lrrrrrr}
	j                       & 1  & 2  & 3  & 4  &  \\
	v_1(j) = \left( \right. & 6  & 6  & 6  & 6  & \left. \right) \\
	v_2(j) = \left( \right. & 1  & 4  & 4  & 6  & \left. \right) \\
	v_3(j) = \left( \right. & 0  & 1  & 1  & 1  & \left. \right)
	\end{array}
\end{align*}

We can reproduce program \lsip\ for this instance as follows. 
Let $\mathcal{I}$ denote the index set of integer points which satisfy inequalities \mulp.\\
Let 
$\setcounter{MaxMatrixCols}{20}$
$\c{c}=\begin{bmatrix} 6 & 6 & 6 & 6 & 1 & 4 & 4 & 6 & 0 & 1 & 1 & 1\end{bmatrix}$, $\c{b} = \begin{bmatrix} 0.5 \\[0.3em] 0.5 \\[0.3em] 0.5 \\[0.3em] 2\end{bmatrix}$, and  
$\c{A} = \begin{bmatrix}  1 & 1 & 1 & 1 & 0 & 0 & 0 & 0 & 0 & 0 & 0 & 0 \\[0.3em] 0 & 0 & 0 & 0 & 1 & 1 & 1 & 1 & 0 & 0 & 0 & 0 \\[0.3em] 0 & 0 & 0 & 0 & 0 & 0 & 0 & 0 & 1 & 1 & 1 & 1 \\[0.3em] 1 & 2 & 3 & 4 & 1 & 2 & 3 & 4 & 1 & 2 & 3 & 4\end{bmatrix}$.
Notice the integrality gap has been reflected in defining $\c{b}$.

\hspace{2cm}\\
\textbf{Initialization Step}

Let the starting basis consist of $\c{s}$ and $\lambda_0$ where $\X_0=\vec{0}$ is the starting integer point. 
Therefore, the first simplex tableau is as the following.

\begin{tabular}{l |*{5}{P{.4cm}}|c}  
\multicolumn{1}{r|}{} & \multicolumn{5}{c|}{\bi} & \multicolumn{1}{c}{\rhs} \\ 
\hline 
$z$ & $0$ & $0$ & $0$ & $0$ & $0$ & $0$ \\ 
\hline 
$s_1$ & $1$ & $0$ & $0$ & $0$ & $0$ & $.5$ \\ 
$s_2$ & $0$ & $1$ & $0$ & $0$ & $0$ & $.5$ \\ 
$s_3$ & $0$ & $0$ & $1$ & $0$ & $0$ & $.5$ \\ 
$s_4$ & $0$ & $0$ & $0$ & $1$ & $0$ & $2$ \\ 
$\lambda_0$ & $0$ & $0$ & $0$ & $0$ & $1$ & $1$ \\ 
\end{tabular}

\hspace{3cm}\\
\textbf{Iteration 1}

\subprob.
From the simplex tableau, we have $\c{w}=\begin{bmatrix}0 & 0 & 0 & 0 \end{bmatrix}$ and $\alpha=0$. 
As a result, $\c{wA}+\c{c}=\c{c}$.
The subproblem therefore is $\max_{j \in \mathcal{I}} \c{c}\X_j$. 
Subroutine \ax\ returns $\X$ such that $\X_{11}=\X_{22}=1$ and all other entries of $\X$ are zero.
The objective of the point with respect to current cost is $z-\hat{c}=10>0$.
Let us call this point $\X_1$.

\masprob.

$\c{A}\X_1=\begin{bmatrix} 1 \\[0.3em] 1 \\[0.3em] 0 \\[0.3em] 3 \end{bmatrix}$. Then $\c{y}_1=\c{B}^{-1}\begin{bmatrix} \c{A}\X_1 \\[0.3em] 1 \end{bmatrix}=\begin{bmatrix} 1 \\[0.3em] 1 \\[0.3em] 0 \\[0.3em] 3 \\[0.3em] 1\end{bmatrix}$.

Now, we insert the column into the foregoing tableau and pivot.
Variable $s_1$ leaves the basis and $\lambda_1$ enters the basis.

\hspace{1cm}\\
\begin{tabular}{l |*{5}{P{.4cm}}|c|c}  
\multicolumn{1}{r|}{} & \multicolumn{5}{c|}{\bi} & \multicolumn{1}{c|}{\rhs} & \multicolumn{1}{c}{$\lambda_1$}\\ 
\hline 
$z$ & $0$ & $0$ & $0$ & $0$ & $0$ & $0$ & $10$\\ 
\hline 
$s_1$ & $1$ & $0$ & $0$ & $0$ & $0$ & $.5$ & $\c{\underline{1}}$ \\ 
$s_2$ & $0$ & $1$ & $0$ & $0$ & $0$ & $.5$ & $1$ \\ 
$s_3$ & $0$ & $0$ & $1$ & $0$ & $0$ & $.5$ & $0$ \\ 
$s_4$ & $0$ & $0$ & $0$ & $1$ & $0$ & $2$ & $3$ \\ 
$\lambda_0$ & $0$ & $0$ & $0$ & $0$ & $1$ & $1$ & $1$ \\ 
\end{tabular}

\hspace{1cm}\\
After pivoting we obtain the following tableau.

\hspace{1cm}\\
\begin{tabular}{l | *{5}{P{.7cm}} |c}  
\multicolumn{1}{r|}{} & \multicolumn{5}{c|}{\bi} & \multicolumn{1}{c}{\rhs} \\ 
\hline 
$z$ & $-10$ & $0$ & $0$ & $0$ & $0$ & $-5$ \\ 
\hline 
$\lambda_1$ & $1$ & $0$ & $0$ & $0$ & $0$ & $.5$ \\ 
$s_2$ & $-1$ & $1$ & $0$ & $0$ & $0$ & $0$ \\ 
$s_3$ & $0$ & $0$ & $1$ & $0$ & $0$ & $.5$ \\ 
$s_4$ & $-3$ & $0$ & $0$ & $1$ & $0$ & $.5$ \\ 
$\lambda_0$ & $-1$ & $0$ & $0$ & $0$ & $1$ & $.5$ \\ 
\end{tabular}

\hspace{2cm}\\
The best-known feasible solution of the overall problem is given by
$\lambda_0 \X_0 + \lambda_1 \X_1=0.5 \X_0 + 0.5 \X_1$.
The current objective value is $5$.

\hspace{3cm}\\
\textbf{Iteration 2}

\subprob.
From the simplex tableau, we have $\c{w}=\begin{bmatrix}-10 & 0 & 0 & 0 \end{bmatrix}$ and $\alpha=0$. 
As a result, $\c{wA}+\c{c}=\begin{bmatrix}-4 & -4 & -4 & -4 & 1 & 4 & 4 & 6 & 0 & 1 & 1 & 1\end{bmatrix}$.
The subproblem therefore is $\max_{j \in \mathcal{I}} (\c{wA}+\c{c})\X_j$. 
Subroutine \ax\ returns $\X$ such that $\X_{24}=1$ and all other entries of $\X$ are zero.
The objective of the point with respect to current cost is $z-\hat{c}=6>0$.
Let us call this point $\X_2$.

\masprob.

$\c{A}\X_2=\begin{bmatrix} 0 \\[0.3em] 1 \\[0.3em] 0 \\[0.3em] 4 \end{bmatrix}$. Then $\c{y}_2=\c{B}^{-1}\begin{bmatrix} \c{A}\X_2 \\[0.3em] 1 \end{bmatrix}=\begin{bmatrix} 0 \\[0.3em] 1 \\[0.3em] 0 \\[0.3em] 4 \\[0.3em] 1\end{bmatrix}$.

Now, we insert the column into the foregoing tableau and pivot.
Variable $s_2$ leaves the basis and $\lambda_2$ enters the basis.

\hspace{1cm}\\
\begin{tabular}{l | *{5}{P{.7cm}} |c|c}  
\multicolumn{1}{r|}{} & \multicolumn{5}{c|}{\bi} & \multicolumn{1}{c|}{\rhs} & \multicolumn{1}{c}{$\lambda_2$}\\ 
\hline 
$z$ & $-10$ & $0$ & $0$ & $0$ & $0$ & $-5$ & $6$ \\ 
\hline 
$\lambda_1$ & $1$ & $0$ & $0$ & $0$ & $0$ & $.5$ & $0$ \\ 
$s_2$ & $-1$ & $1$ & $0$ & $0$ & $0$ & $0$ & $\underline{\c{1}}$ \\ 
$s_3$ & $0$ & $0$ & $1$ & $0$ & $0$ & $.5$ & $0$ \\ 
$s_4$ & $-3$ & $0$ & $0$ & $1$ & $0$ & $.5$ & $4$ \\ 
$\lambda_0$ & $-1$ & $0$ & $0$ & $0$ & $1$ & $.5$ & $1$ \\ 
\end{tabular}

\hspace{1cm}\\
After pivoting we obtain the following tableau.

\hspace{1cm}\\
\begin{tabular}{l | *{5}{P{.7cm}} |c}  
\multicolumn{1}{r|}{} & \multicolumn{5}{c|}{\bi} & \multicolumn{1}{c}{\rhs} \\ 
\hline 
$z$ & $-4$ & $-6$ & $0$ & $0$ & $0$ & $-5$ \\ 
\hline 
$\lambda_1$ & $1$ & $0$ & $0$ & $0$ & $0$ & $.5$ \\ 
$\lambda_2$ & $-1$ & $1$ & $0$ & $0$ & $0$ & $0$ \\ 
$s_3$ & $0$ & $0$ & $1$ & $0$ & $0$ & $.5$ \\ 
$s_4$ & $1$ & $-4$ & $0$ & $1$ & $0$ & $.5$ \\ 
$\lambda_0$ & $0$ & $-1$ & $0$ & $0$ & $1$ & $.5$ \\ 
\end{tabular}

\hspace{2cm}\\
The best-known feasible solution of the overall problem is given by
$\lambda_0 \X_0 + \lambda_1 \X_1=0.5 \X_0 + 0.5 \X_1$.
The current objective value is $5$.

\hspace{3cm}\\
\textbf{Iteration 3}

\subprob.
From the simplex tableau, we have $\c{w}=\begin{bmatrix}-4 & -6 & 0 & 0 \end{bmatrix}$ and $\alpha=0$. 
As a result, $\c{wA}+\c{c}=\begin{bmatrix}2 & 2 & 2 & 2 & -5 & -2 & -2 & 0 & 0 & 1 & 1 & 1\end{bmatrix}$.
The subproblem therefore is $\max_{j \in \mathcal{I}} (\c{wA}+\c{c})\X_j$. 
Subroutine \ax\ returns $\X$ such that $\X_{11}=1$, $\X_{32}=1$ and all other entries of $\X$ are zero.
The objective of the point with respect to current cost is $z-\hat{c}=3>0$.
Let us call this point $\X_3$.

\masprob.

$\c{A}\X_3=\begin{bmatrix} 1 \\[0.3em] 0 \\[0.3em] 1 \\[0.3em] 3 \end{bmatrix}$. Then $\c{y}_2=\c{B}^{-1}\begin{bmatrix} \c{A}\X_3 \\[0.3em] 1 \end{bmatrix}=\begin{bmatrix} 1 \\[0.3em] -1 \\[0.3em] 1 \\[0.3em] 4 \\[0.3em] 1\end{bmatrix}$.

Now, we insert the column into the foregoing tableau and pivot.
Variable $s_4$ leaves the basis and $\lambda_3$ enters the basis.

\hspace{1cm}\\
\begin{tabular}{l | *{5}{P{.7cm}} |c|c}  
\multicolumn{1}{r|}{} & \multicolumn{5}{c|}{\bi} & \multicolumn{1}{c|}{\rhs} & \multicolumn{1}{c}{$\lambda_3$}\\ 
\hline 
$z$ & $-4$ & $-6$ & $0$ & $0$ & $0$ & $-5$ & $3$\\ 
\hline 
$\lambda_1$ & $1$ & $0$ & $0$ & $0$ & $0$ & $.5$ & $1$\\ 
$\lambda_2$ & $-1$ & $1$ & $0$ & $0$ & $0$ & $0$ & $-1$\\ 
$s_3$ & $0$ & $0$ & $1$ & $0$ & $0$ & $.5$ & $1$ \\ 
$s_4$ & $1$ & $-4$ & $0$ & $1$ & $0$ & $.5$ & $\c{\underline{4}}$\\ 
$\lambda_0$ & $0$ & $-1$ & $0$ & $0$ & $1$ & $.5$ & $1$\\ 
\end{tabular}

\hspace{1cm}\\
After pivoting we obtain the following tableau.

\hspace{1cm}\\
\begin{tabular}{l | *{5}{P{.7cm}} |c}  
\multicolumn{1}{r|}{} & \multicolumn{5}{c|}{\bi} & \multicolumn{1}{c}{\rhs} \\ 
\hline 
$z$ & $-4.75$ & $-3$ & $0$ & $-.75$ & $0$ & $-5.375$ \\ 
\hline 
$\lambda_1$ & $.75$ & $1$ & $0$ & $-.25$ & $0$ & $.375$ \\ 
$\lambda_2$ & $-.75$ & $0$ & $0$ & $.25$ & $0$ & $.125$ \\ 
$s_3$ & $-.25$ & $1$ & $1$ & $-.25$ & $0$ & $.375$ \\ 
$\lambda_3$ & $.25$ & $-1$ & $0$ & $.25$ & $0$ & $.125$ \\ 
$\lambda_0$ & $-.25$ & $0$ & $0$ & $-.25$ & $1$ & $.375$ \\ 
\end{tabular}

\hspace{2cm}\\
The best-known feasible solution of the overall problem is given by
$\lambda_0 \X_0 + \lambda_1 \X_1 + \lambda_2 \X_2 + \lambda_3 \X_3 = 0.375 \X_0 + 0.375 \X_1 + 0.125 \X_2 + 0.125 \X_3$.
The current objective value is $5.375$.

\hspace{3cm}\\
\textbf{Iteration 4}

\subprob.
From the simplex tableau, we have $\c{w}=\begin{bmatrix}-4.75 & -3 & 0 & -.75 \end{bmatrix}$ and $\alpha=0$. 
As a result, \\$\c{wA}+\c{c}=\begin{bmatrix}.5 & -.25 & -1 & -1.75 & -2.75 & -.5 & -1.25 & 0 & -.75 & -.5 & -1.25 & -2\end{bmatrix}$.
The subproblem therefore is $\max_{j \in \mathcal{I}} (\c{wA}+\c{c})\X_j$. 
Subroutine \ax\ returns $\X$ such that $\X_{11}=1$ and all other entries of $\X$ are zero.
The objective of the point with respect to current cost is $z-\hat{c}=0.5>0$.
Let us call this point $\X_4$.

\masprob.

$\c{A}\X_4=\begin{bmatrix} 1 \\[0.3em] 0 \\[0.3em] 0 \\[0.3em] 1 \end{bmatrix}$. Then $\c{y}_2=\c{B}^{-1}\begin{bmatrix} \c{A}\X_4 \\[0.3em] 1 \end{bmatrix}=\begin{bmatrix} .5 \\[0.3em] -.5 \\[0.3em] -.5 \\[0.3em] .5 \\[0.3em] .5\end{bmatrix}$.

Now, we insert the column into the foregoing tableau and pivot.
Variable $\lambda_3$ leaves the basis and $\lambda_4$ enters the basis.

\hspace{1cm}\\
\begin{tabular}{l | *{5}{P{.7cm}} |c|c}  
\multicolumn{1}{r|}{} & \multicolumn{5}{c|}{\bi} & \multicolumn{1}{c|}{\rhs} & \multicolumn{1}{c}{$\lambda_4$}\\ 
\hline 
$z$ & $-4.75$ & $-3$ & $0$ & $-.75$ & $0$ & $-5.375$ & $.5$\\ 
\hline 
$\lambda_1$ & $.75$ & $1$ & $0$ & $-.25$ & $0$ & $.375$ & $.5$\\ 
$\lambda_2$ & $-.75$ & $0$ & $0$ & $.25$ & $0$ & $.125$ & $-.5$\\ 
$s_3$ & $-.25$ & $1$ & $1$ & $-.25$ & $0$ & $.375$ & $-.5$ \\ 
$\lambda_3$ & $.25$ & $-1$ & $0$ & $.25$ & $0$ & $.125$ & $\c{\underline{.5}}$\\ 
$\lambda_0$ & $-.25$ & $0$ & $0$ & $-.25$ & $1$ & $.375$ & $.5$\\ 
\end{tabular}

\hspace{1cm}\\
After pivoting we obtain the following tableau.

\hspace{1cm}\\
\begin{tabular}{l | *{5}{P{.7cm}} |c}  
\multicolumn{1}{r|}{} & \multicolumn{5}{c|}{\bi} & \multicolumn{1}{c}{\rhs} \\ 
\hline 
$z$ & $-5$ & $-2$ & $0$ & $-1$ & $0$ & $-5.5$ \\ 
\hline 
$\lambda_1$ & $.5$ & $-2$ & $0$ & $-.5$ & $0$ & $.25$ \\ 
$\lambda_2$ & $-.5$ & $-1$ & $0$ & $.5$ & $0$ & $.25$ \\ 
$s_3$ & $0$ & $0$ & $1$ & $0$ & $0$ & $.5$ \\ 
$\lambda_4$ & $.5$ & $-2$ & $0$ & $.5$ & $0$ & $.25$ \\ 
$\lambda_0$ & $-.5$ & $1$ & $0$ & $-.5$ & $1$ & $.25$ \\ 
\end{tabular}

\hspace{2cm}\\
The best-known feasible solution of the overall problem is given by
$\lambda_0 \X_0 + \lambda_1 \X_1 + \lambda_2 \X_2 + \lambda_4 \X_4 = 0.25 \X_0 + 0.25 \X_1 + 0.25 \X_2 + 0.25 \X_4$.
The current objective value is $5.5$.

\hspace{3cm}\\
\textbf{Iteration 5}

\subprob.
From the simplex tableau, we have $\c{w}=\begin{bmatrix}-5 & -2 & 0 & -1 \end{bmatrix}$ and $\alpha=0$. 
As a result, \\$\c{wA}+\c{c}=\begin{bmatrix}-6 & -7 & -8 & -9 & -3 & -4 & -5 & -6 & -1 & -2 & -3 & -4\end{bmatrix}$.
The subproblem therefore is $\max_{j \in \mathcal{I}} (\c{wA}+\c{c})\X_j$. 
Subroutine \ax\ returns $\X=\vec{0}$.
The objective of the point with respect to current cost is $z-\hat{c}=0$. 
Therefore, the algorithm terminates.
Our final solution is as follows.
\begin{displaymath}
\c{x}^*= \begin{bmatrix} x_{11} \\[0.3em] x_{22}\\[0.3em] x_{24} \end{bmatrix}
= 0.25\begin{bmatrix} 1 \\[0.3em] 1 \\[0.3em] 0\end{bmatrix}
+ 0.25\begin{bmatrix} 0 \\[0.3em] 0 \\[0.3em] 1\end{bmatrix}
+ 0.25\begin{bmatrix} 1 \\[0.3em] 0 \\[0.3em] 0\end{bmatrix}
+ 0.25\begin{bmatrix} 0 \\[0.3em] 0 \\[0.3em] 0\end{bmatrix}
=\begin{bmatrix} 0.5 \\[0.3em] 0.25 \\[0.3em] 0.25\end{bmatrix}.\end{displaymath}

A simple examination shows that $\c{x}^*$ is in fact one half (scaled down by the integrality gap) of the optimal solution to $\text{MU-P}$ for our instance.


\end{document}